\newtheorem{defin}{Definition}
\newtheorem{theo}[defin]{Theorem}
 \newenvironment{theorem}{\begin{theo} \sl}{\end{theo}}
\newtheorem{lem}[defin]{Lemma}
 \newenvironment{lemma}{\begin{lem} \sl}{\end{lem}}
\newtheorem{coro}[defin]{Corollary}
 \newenvironment{corollary}{\begin{coro} \sl}{\end{coro}}
\newtheorem{prop}[defin]{Proposition}
\newenvironment{proof}{\emph{Proof.}}{\hfill $\Box$\\}
\newcommand{\Vis}{\mathord{\it Vis}}
\newcommand{\etal}{\emph{et~al.}\xspace}
\title{Constrained Routing Between Non-Visible Vertices\thanks{An extended abstract of this paper appeared in the proceedings of the 23rd Annual International Computing and Combinatorics Conference (COCOON 2017)~\cite{BKRV2017Routing}. 
P.~B. is supported in part by NSERC. 
M.~K.~was partially supported by MEXT KAKENHI Nos.~12H00855, 15H02665, and 17K12635. 
A.~v.~R. was supported by JST ERATO Grant Number JPMJER1201, Japan. 
S.~V.~was supported in part by NSERC and the Carleton-Fields Postdoctoral Award.}}
\author{Prosenjit Bose \and Matias Korman \and Andr\'e van Renssen  \and Sander Verdonschot}
\date{}
\begin{document}

\maketitle

\begin{abstract}
In this paper we study local routing strategies on geometric graphs. Such strategies use geometric properties of the graph like the coordinates of the current and target nodes to route. Specifically, we study routing strategies in the presence of constraints which are obstacles that edges of the graph are not allowed to cross. Let $P$ be a set of $n$ points in the plane and let $S$ be a set of line segments whose endpoints are in $P$, with no two line segments intersecting properly. We present the first deterministic 1-local $O(1)$-memory routing algorithm that is guaranteed to find a path between two vertices in the \emph{visibility graph} of $P$ with respect to a set of constraints $S$. The strategy never looks beyond the direct neighbors of the current node and does not store more than $O(1)$-information to reach the target. 

We then turn our attention to finding competitive routing strategies. We show that when routing on any triangulation $T$ of $P$ such that $S\subseteq T$, no $o(n)$-competitive routing algorithm exists when the routing strategy restricts its attention to the triangles intersected by the line segment from the source to the target (a technique commonly used in the unconstrained setting). Finally, we provide an $O(n)$-competitive deterministic 1-local $O(1)$-memory routing algorithm on any such $T$, which is optimal in the worst case, given the lower bound.  
\end{abstract}

\section{Introduction}

A distributed routing strategy is an algorithm that, given an intended destination, determines to which of the neighbors of the current vertex to forward the message. A distributed routing strategy is {\em local} when that decision is based solely on knowledge of the location of the current vertex, the location of its neighbors and a constant amount of additional information (such as the location of the source vertex and destination vertex). A traditional approach to this routing problem is to build a {\em routing table} at each node, explicitly storing for each destination vertex, which neighbor of the current vertex to send the message. 

In this paper, we study routing algorithms on geometric graphs and try to circumvent the use of routing tables by leveraging geometric information. This model excels when global network information is either difficult to compute or unavailable. This can be due to efficiency or time constraints, or the network could be dynamic and updating this information might be a problem. To tackle this additional hurdle, we use geometric tools. A routing algorithm is considered \emph{geometric} when the graph that is routed on is embedded in the plane, with edges being straight line segments connecting pairs of vertices. Edges are usually weighted by the Euclidean distance between their endpoints. Geometric routing algorithms are particularly useful in wireless sensor networks \mbox{(see \cite{G09} and \cite{R09}} for surveys on the topic), since nodes often connect only to nearby nodes. Thus, by exploiting geometric properties (such as distance, or the coordinates of the vertices) we can devise algorithms to guide the search and remove the need for routing tables. 

We consider the following setting: let $P$ be a set of $n$ points in the plane and let $S$ be a set of line segments whose endpoints are in $P$, with no two line segments of $S$ properly intersecting (i.e., intersections only occur at endpoints). Two vertices $u$ and $v$ are \textit{visible} if and only if either the line segment $uv$ does not properly intersect any constraint or the segment $uv$ is itself a constraint. If two vertices $u$ and $v$ are visible, then the line segment $uv$ is a \emph{visibility edge}. The \emph{visibility graph} of $P$ with respect to a set of constraints $S$, denoted $\Vis(P,S)$, has $P$ as vertex set and all visibility edges as edge set. In other words, it is the complete graph on $P$ minus all edges that properly intersect one or more constraints in~$S$. 

This model has been studied extensively in the context of motion planning. Clarkson \cite{C87} was one of the first to study this problem. He showed how to construct a $(1+\epsilon)$-{\em spanner} of $\Vis(P,S)$ with a linear number of edges. A subgraph $H$ of $G$ is called a $t$-{\em spanner} of $G$ (for $t\geq 1$) if for each pair of vertices $u$ and $v$, the shortest path in $H$ between $u$ and $v$ has length at most $t$ times the shortest path between $u$ and $v$ in $G$. The smallest value $t$ for which $H$ is a $t$-spanner is the \emph{spanning ratio} or \emph{stretch factor} of $H$. Following Clarkson's result, Das \cite{D97} showed how to construct a spanner of $\Vis(P,S)$ with constant spanning ratio and constant degree. Bose and Keil \cite{BK06} showed that the Constrained Delaunay Triangulation is a 2.42-spanner of $\Vis(P,S)$. Recently, the constrained half-$\Theta_6$-graph (which is identical to the constrained Delaunay graph whose empty visible region is an equilateral triangle) was shown to be a plane 2-spanner of $\Vis(P,S)$~\cite{BFRV12Constrained} and all constrained $\Theta$-graphs with at least 6 cones were shown to be spanners as well~\cite{BR14}. 

Spanners of $\Vis(P, S)$ are desirable because they can be sparse and the bounded stretch factor certifies that paths do not make large detours compared to the shortest path in $\Vis(P, S)$. Thus, by using a spanner we can compact a potentially large network using a small number of edges at the cost of a small detour when sending the messages. Unfortunately, little is known on how to route once the network has been built.
Bose~\etal~\cite{BFRV2017RoutingJournal} showed that it is possible to route locally and 2-competitively between any two visible vertices in the constrained $\Theta_6$-graph. Additionally, an 18-competitive routing algorithm between any two visible vertices in the constrained half-$\Theta_6$-graph was provided (the definition of these two graphs as well as formal definitions of {\em local} and {\em competitiveness ratio} are given in Section~\ref{sec_thetadef}). Intuitively, a routing strategy is $c$-competitive if, for every pair of vertices $u$ and $v$, the length of its path between $u$ and $v$ is at most $c$ times that of the shortest path between the vertices. While it seems like a serious shortcoming that these routing algorithms only route between pairs of visible vertices, in the same paper the authors also showed that no deterministic local routing algorithm can be $o(\sqrt{n})$-competitive between all pairs of vertices of the constrained $\Theta_6$-graph, regardless of the amount of memory one is allowed to use. As such, the best one can hope for in this setting is an $O(\sqrt{n})$ competitive routing ratio.

In this paper, we develop routing algorithms that work between any pair of vertices in the constrained setting. This is, to the best of our knowledge, the only deterministic 1-local routing strategy that works for vertices that cannot see each other in the constrained setting. We provide a non-competitive 1-local routing algorithm on the visibility graph of $P$ with respect to a set of constraints $S$. Our algorithm locally computes a sparse subgraph of the visibility graph and routes on it. 

Parallel to this work, we designed a routing strategy that specifically works in the visibility graph directly (without having to compute a subgraph). The details of this routing strategy are quite lengthy, so they are given in a companion paper~\cite{bkrv-rvg-17}. Like Theorem~\ref{theo_routing1} presented in this paper, that algorithm is 1-local and non-competitive. In a nutshell, the algorithm presented in this paper has two stages: (i) compute a subset of edges to remove so that the graph becomes planar and then (ii) use a previously known routing technique on the subgraph. The companion paper avoids the 2-stage approach. Instead, it gives a (more involved) routing strategy tailored specifically for the visibility graph. Because of the simplicity of the approach presented in this paper, we believe that the results presented here can more easily generalize to other families of graphs.

After presenting this routing strategy, we also show that when routing on any triangulation $T$ of $P$ such that $S\subseteq T$, no $o(n)$-competitive routing algorithm exists when only considering the triangles intersected by the line segment from the source to the target. Considering only this subset of triangles is a technique commonly used to route in the unconstrained setting. Finally, we provide an $O(n)$-competitive 1-local routing algorithm on $T$, which is optimal in the worst case, given the lower bound.

\section{Preliminaries}\label{sec_thetadef}
\subsection{Routing Model}
Given a graph $G=(V,E)$, the $k$-neighborhood of a vertex $u \in V$ is the set of vertices in the graph that can be reached from $u$ by following at most $k$ edges (and is denoted by $N_k(u)$). We assume that the only information stored at each vertex of the graph is $N_k(u)$ for some fixed constant $k$. Since our graphs are geometric, vertices are points in the plane. We label each vertex by its coordinates in the plane.

We are interested in deterministic $k$-local, $m$-memory {\em routing algorithms}. That is, the vertex to which the message is forwarded is determined by a deterministic function that only depends on $s$ (the source vertex), $u$ (the current vertex), $t$ (the destination vertex), $N_k(u)$ and a string $M$ of at most $m$ words. This string $M$ is stored within the message and can be modified before forwarding the message to the next node. For our purposes, we consider a word (or unit of memory) to consist of a $\log_2 n$ bit integer or a point in $\mathbb{R}^2$. 

We focus on algorithms that guarantee that the message will arrive at its destination (i.e., for any graph $G$ and source vertex $s$, by repeatedly applying the routing strategy we will reach the destination vertex in a finite number of steps). We will focus on the case where $k=1$ and $|M| \in O(1)$. Thus, for brevity, by {\em local} routing algorithm we mean that the algorithm is 1-local, uses a constant amount of memory, and arrival of the message at the destination is guaranteed.

\subsection{Competitiveness}
Intuitively speaking, we can evaluate how good a routing algorithm is by looking at the detour it makes (i.e., how long are the paths compared to the shortest possible). We say that a routing algorithm is {\em $c$-competitive} with respect to a graph $G$ if, for any pair of vertices $u,v\in V$, the total distance traveled by the message is not more than $c$ times the shortest path length between $u$ and $v$ in $G$. The \emph{routing ratio} of an algorithm is the smallest $c$ for which it is $c$-competitive.

\subsection{Graph Definitions}
In this section we introduce the $\Theta_m$-graph, a graph that plays an important role in our routing strategy. We begin by defining this graph and some known variations. Define a \emph{cone} $C$ to be the region in the plane between two rays originating from a vertex (the vertex itself is referred to as the {\em apex} of the cone). When constructing a (constrained) $\Theta_m$-graph of a set $P$ of $n$ vertices we proceed as follows: for each vertex $u\in P$ consider $m$ rays originating from $u$ so that the angle between two consecutive rays is $2 \pi / m$. Each pair of consecutive rays defines a cone. We orient the rays in a way that the bisector of one of the cones is the vertical halfline through $u$ that lies above $u$. Let this cone be $C_0$ of $u$. We number the other cones $C_1, \ldots, C_{m-1}$ in clockwise order around $u$ (see Fig.~\ref{fig:Cones}). We apply the same partition and numbering for the other vertices of $P$. We write $C_i^u$ to indicate the $i$-th cone of a vertex $u$, or $C_i$ if $u$ is clear from the context. For ease of exposition, we only consider point sets in general position: no two vertices lie on a line parallel to one of the rays that define the cones, no two vertices lie on a line perpendicular to the bisector of a cone, no three vertices are collinear, and no four vertices lie on the boundary of any circle. All these assumptions can be removed using classic symbolic perturbation techniques~\cite{EC95,EM90,Y90}. 

The $\Theta_m$-graph is constructed by adding an edge from $u$ to the {\em closest} vertex in each cone $C_i$ of each vertex $u$, where distance is measured along the bisector of the cone. More formally, we add an edge between two vertices $u$ and $v\in C_i^u$ if for all vertices $w \in C_i^u$ it holds that $|u v'| \leq |u w'|$ (where $v'$ and $w'$ denote the projection of $v$ and $w$ on the bisector of $C_i^u$ and $|x y|$ denotes the length of the line segment between two points $x$ and $y$). Note that our general position assumption implies that each vertex adds at most one edge per cone.

\begin{figure}[ht]
  \begin{minipage}[t]{0.45\linewidth}
    \begin{center}
      \includegraphics{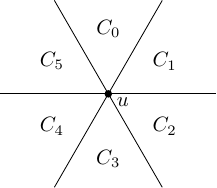}
    \end{center}
    \caption{Vertex $u$ and the six cones that are generated in the $\Theta_6$-graph. All vertices of $P$ have a similar construction with six cones and the same orientation.}
    \label{fig:Cones}
  \end{minipage}
  \hspace{0.05\linewidth}
  \begin{minipage}[t]{0.45\linewidth}
    \begin{center}
      \includegraphics{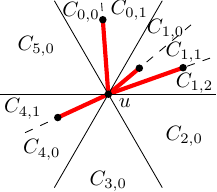}
    \end{center}
    \caption{When $u$ is the endpoint of one or more constraints (denoted as red thick segments in the figure), some cones may be partitioned into subcones.}
    \label{fig:ConstrainedCones}
  \end{minipage}
\end{figure}

The $\Theta_m$-graph has been adapted to the case where constraints are present; for every constraint whose endpoint is $u$, consider the ray from $u$ to the other endpoint of the constraint. These rays split the cones into several \emph{subcones} (see Fig.~\ref{fig:ConstrainedCones}). We use $C_{i, j}^u$ to denote the $j$-th subcone of $C_i^u$ (also numbered in clockwise order). Note that if some cone $C_i$ is not subdivided with this process, we simply have $C_i=C_{i,0}$ (i.e., $C_i$ is a single subcone). Further note that we treat the subcones as closed sets (i.e., contain their boundary). Thus, when a constraint $c = (u, v)$ splits a cone of $u$ into two subcones, vertex $v$ lies in both subcones. Due to the general position assumption, this is the only case where a vertex can be in two subcones of $u$.

With the subcone partition we can define the {\em constrained} $\Theta_m$-graph: for each subcone $C_{i, j}$ of each vertex $u$, add an edge from $u$ to the {\em closest} vertex that is in that subcone and can see $u$ (if any exist). Note that distance is measured along the bisector of the original cone (\emph{not the subcone}, see Fig.~\ref{fig:Projection}). More formally, we add an edge between two vertices $u$ and $v\in C_{i, j}^u$ if $v$ can see $u$, and for all vertices $w \in C_{i, j}^u$ that can see $u$ it holds that $|u v'| \leq |u w'|$ (where $v'$ and $w'$ denote the projection of $v$ and $w$ on the bisector of $C_i^u$ and $|x y|$ denotes the length of the line segment between two points $x$ and $y$). Note that our general position assumption implies that each vertex adds at most one edge per subcone.

\begin{figure}[ht]
  \begin{center}
    \includegraphics{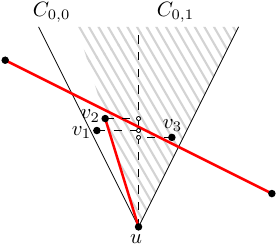}
  \end{center}
  \caption{The constraint $(u,v_2)$ partitions $C_0^u$ into two subcones. Subcone $C_{0,0}$ contains two visible vertices, out of which $v_1$ is closest to $u$. Subcone $C_{0,1}$ only contains one visible vertex: $v_2$ (note that $v_3$ is closer to $u$ than $v_2$, but it is not visible).} 
  \label{fig:Projection}
\end{figure}

Although constrained $\Theta_m$-graphs are quite sparse, sometimes it is useful to have even fewer edges. Thus, we introduce the constrained  {\em half}-$\Theta_6$-graph. This is the natural generalization of the half-$\Theta_6$-graph as described by Bonichon \etal\cite{BGHI10}, who considered the case where no constraints are present. This graph is defined for any even $m$, but in this paper we will consider only the case where $m=6$. Thus, for simplicity in notation we define only the constrained half-$\Theta_6$-graph.

The main change with respect to the constrained $\Theta_6$-graph is that edges are added only in every second cone. More formally, we rename the cones of a vertex $u$ to $(C_0, \overline{C_1}, C_2, \overline{C_0}, C_1, \overline{C_2})$ (as usual, we use clockwise order starting from the cone containing the positive $y$-axis). The cones $C_0$, $C_1$, and $C_2$ are called \emph{positive} cones and $\overline{C_0}$, $\overline{C_1}$, and $\overline{C_2}$ are called \emph{negative} cones. 

We use $C^u_i$ and $\overline{C^u_i}$ to denote cones $C_i$ and $\overline{C_i}$ with apex $u$. Note that, by the way the cones are labeled, for any two vertices $u$ and $v$, it holds that $v \in C^u_i$ if and only if $u \in \overline{C^v_i}$. Analogous to the subcones defined for the constrained $\Theta_6$-graph, constraints split cones into subcones. We call a subcone of a positive cone a positive subcone and a subcone of a negative cone a negative subcone (see Fig.~\ref{fig:ConstrainedConesHalfGraph}). 

\begin{figure}[ht]
  \begin{center}
    \includegraphics{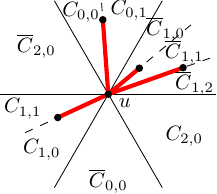}
  \end{center}
 \caption{The constrained half-$\Theta_6$-graph uses a construction similar to that of Fig.~\ref{fig:ConstrainedCones}. Notice that we have the same number of cones, but different notation is used.}
 \label{fig:ConstrainedConesHalfGraph}
\end{figure}

In the constrained half-$\Theta_6$-graph we add edges like in the constrained-$\Theta_6$-graph, but only in the positive cones (and their subcones). Hence, the constrained half-$\Theta_6$-graph is a proper subgraph of the constrained-$\Theta_6$-graph. We look at the undirected version of these graphs, i.e. when an edge is added, both vertices are allowed to use it. This is consistent with previous work on $\Theta$-graphs. 

Finally, we define the {\em constrained Delaunay triangulation}~\cite{BK06}. Given any two visible vertices $p$ and $q$, the constrained Delaunay triangulation contains an edge between $p$ and $q$ if and only if $p q$ is a constraint or there exists a circle $O$ with $p$ and $q$ on its boundary such that there is no vertex of $P$ in the interior of $O$ that is visible to both $p$ and $q$.

\section{Local Routing on the Visibility Graph}
\label{sec:routing}
In the unconstrained setting there is a very simple local routing algorithm for $\Theta_m$-graphs. The algorithm (often called $\Theta$-routing) greedily follows the edge to the closest vertex in the cone that contains the destination. This strategy is guaranteed to work for $m\geq 4$, and is competitive when $m\geq 7$~\cite{C87}.

This strategy does not easily extend to the case where constraints are present: it is possible that the cone containing the destination does not have any visible vertices, since a constraint blocks its visibility (see Fig.~\ref{fig:ThetaRoutingStuck}). Having no edge in that cone, it is unclear how to reach the destination that lies beyond the constraint. In fact, given a set $P$ of vertices in the plane and a set $S$ of disjoint segments, no  deterministic local routing algorithm is known for routing on $\Vis(P,S)$ that guarantees delivery of the message.

\begin{figure}[ht]
  \begin{center}
    \includegraphics{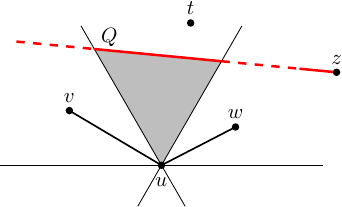}
  \end{center}
  \caption{The classic $\Theta$-routing algorithm can get stuck in the presence of constraints. In the example, $u$ does not have any edge in the cone that contains the destination $t$, because it is behind a constraint.}
  \label{fig:ThetaRoutingStuck}
\end{figure}

When the destination $t$ is visible to the source $s$, it is possible to route locally by essentially ``following the line segment $st$'', since no constraint can intersect $st$. This approach was used to give a 2-competitive 1-local routing algorithm on the constrained half-$\Theta_6$-graph, provided that $t$ is in a positive cone of $s$~\cite{BFRV2017RoutingJournal}. In the case where $t$ is in a negative cone of $s$, the algorithm is much more involved and the competitive ratio jumps to 18.

The stumbling block of all known approaches is the presence of constraints. In a nutshell, the problem is to determine how to ``go around'' a constraint in such a way as to reach the destination and prevent cycling. This gives rise to the following question: does there exist a deterministic 1-local routing algorithm that always reaches the destination when routing on the visibility graph? In this section, we answer this question in the affirmative. We provide a 1-local algorithm that is guaranteed to route from a given source to a destination, in the presence of constraints. The main idea is to route on a planar subgraph of $\Vis(P,S)$ that can be computed locally.

In \cite{KSU99} it was shown how to route locally on a plane geometric graph. Subsequently, in \cite{BMSU01}, a modified algorithm was presented that seemed to work better in practice. Both algorithms are described in detail in \cite{BMSU01}, where the latter algorithm is called FACE-2 and the former is called FACE-1. Neither of the algorithms is competitive. FACE-1 reaches the destination after traversing at most $\Theta(n)$ edges in the worst case and FACE-2 traverses $\Theta(n^2)$ edges in the worst case. Although FACE-1 performs better in the worst case, FACE-2 performs better on average in random graphs generated by vertices uniformly distributed in the unit square. 

Coming back to our problem of routing locally from a source $s$ to a destination $t$ in $\Vis(P,S)$, the main difficulty for using the above strategies is that the visibility graph is not a plane graph. Its seems counter-intuitive that having more edges makes the problem of finding a path more difficult. Indeed, almost all local routing algorithms in the literature that guarantee delivery do so by routing on a plane subgraph that is computed locally. For example, in \cite{BMSU01}, a local routing algorithm is presented for routing on a unit disk graph and the algorithm actually routes on a planar subgraph known as the Gabriel graph. However, none of these algorithms guarantee delivery in the presence of constraints. In this section, we adapt the approach from \cite{BMSU01} by showing how to locally identify the edges of a planar spanning subgraph of $\Vis(P,S)$, which then allows us to use FACE-1 or FACE-2 to route locally on $\Vis(P,S)$.

Our aim is to route on the constrained half-$\Theta_6$-graph. This graph was shown to be a plane 2-spanner of $\Vis(P,S)$~\cite{BFRV12Constrained}. The authors also showed a partial routing result (only between visible vertices) on this graph~\cite{BFRV2017RoutingJournal}.

\begin{lemma}{\emph{(Lemma~1 of~\cite{BFRV12Constrained}, see Figure~\ref{fig:VisiblePointInsideTriangle})}}
  \label{lem:ConvexChain}
  Let $u$, $v$, and $w$ be three points in the plane such that $u w$ and $v w$ are visibility edges and $w$ is not the endpoint of a constraint intersecting the interior of triangle $u v w$. Then there exists a convex chain of visibility edges from $u$ to $v$ in triangle $u v w$, such that the polygon defined by $u w$, $w v$ and the convex chain does not contain any constraint or vertex of $P$.
\end{lemma}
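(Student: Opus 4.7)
The plan is to realise the desired chain as one half of the convex hull boundary of the vertices of $P$ sitting inside the triangle. Let $X$ denote the set of vertices of $P$ lying in the closed triangle $uvw$ with $w$ removed; then $u,v \in X$, and since the triangle is convex we have $\mathrm{conv}(X) \subseteq uvw$. The boundary of $\mathrm{conv}(X)$ passes through $u$ and $v$ and is split by them into two chains. One of these chains is just the segment $uv$, because all other vertices of $X$ lie on the $w$-side of the line through $uv$; I would take the other chain $u=x_0,x_1,\dots,x_k=v$, which faces $w$. By construction it is a convex chain, it lies inside the triangle, and by general position it meets the relative interiors of $uw$ and $wv$ nowhere. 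Hence, together with $uw$ and $wv$, it bounds a simple polygon $R$.

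The easy half of the statement is that $R$ contains no vertex and no constraint in its interior. Any vertex of $P$ in the interior of $R$ would lie in $X$, hence in $\mathrm{conv}(X)$, hence on or below the upper chain; but the interior of $R$ lies strictly above the chain, a contradiction. A constraint whose interior meets the interior of $R$ either crosses the boundary $\partial R$ or has both endpoints on $\partial R$. The first possibility is ruled out by the visibility of $uw$ and $wv$, and (modulo the last step below) by the visibility of the chain edges. For the second, general position together with the hypothesis that $w$ is not an endpoint of such a constraint forces both endpoints to be chain vertices $x_i$ and $x_j$, and convexity of the chain then places the chord $x_ix_j$ on or below the chain, i.e.\ outside the interior of $R$.

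The main obstacle is therefore showing that every chain edge $x_ix_{i+1}$ is itself a visibility edge. I would handle this by a case analysis on where the endpoints of a hypothetical crossing constraint $c$ can lie. A key preliminary observation is that no constraint with both endpoints strictly outside the closed triangle can meet its interior: such a constraint would have to enter and leave through two sides of the triangle, yet it cannot properly cross the visibility edges $uw$ or $wv$, and it can cross the straight segment $uv$ at most once. Combined with the hypothesis on $w$, this forces at least one endpoint of $c$ to lie in $X$. If both endpoints lie in $X$, then the whole segment $c$ is contained in $\mathrm{conv}(X)$ and cannot properly cross its upper boundary. If only one endpoint $a$ of $c$ is in $X$, then the portion of $c$ inside the triangle runs from $a$ to some point $p$ on $uv$; since $p$ also lies in $\mathrm{conv}(X)$ (it is on the lower boundary $uv$), this portion is still contained in $\mathrm{conv}(X)$ and again cannot properly cross the upper chain. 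Both cases yield a contradiction, completing the plan.
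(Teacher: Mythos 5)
This lemma is imported: the paper states it as Lemma~1 of~\cite{BFRV12Constrained} and gives no proof of its own, so there is nothing in this document to compare your argument against line by line. Judged on its own, your proof is correct and self-contained. Taking the convex hull of the vertices of $P$ inside the closed triangle (minus $w$) and using the chain facing $w$ is a clean way to produce the chain, and your three key steps all hold: the emptiness of the region between $w$ and the chain follows because every vertex in the triangle lies in $\mathrm{conv}(X)$; the preliminary observation that a constraint with both endpoints strictly outside the closed triangle cannot reach its interior is exactly right (it would need two proper boundary crossings, but $uw$ and $wv$ are visibility edges and a segment meets the line through $uv$ at most once); and the reduction of ``chain edges are visibility edges'' to ``a segment with at least one endpoint in $X$, clipped to the triangle, stays inside $\mathrm{conv}(X)$ and so cannot properly cross a hull edge'' is sound, with the general-position assumptions (no three collinear vertices) quietly disposing of the degenerate incidences. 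Two places deserve one more explicit sentence each: a constraint could in principle meet $\partial R$ at a \emph{vertex} of the chain or at $u$, $v$, $w$ rather than crossing a boundary edge properly (ruled out by collinearity of three vertices of $P$, or by the hypothesis on $w$); and the degenerate case $X=\{u,v\}$, where the chain is the single edge $uv$ and $R$ is the whole triangle, should be noted as the base configuration rather than folded silently into the general argument. Neither is a real gap.
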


 \begin{figure}[ht]
  \begin{center}
    \includegraphics{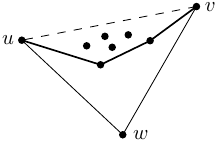}
  \end{center}
  \caption{The (simplified) figure from~\cite{BFRV12Constrained} illustrating Lemma~\ref{lem:ConvexChain}.}
  \label{fig:VisiblePointInsideTriangle}
\end{figure}

We now show how to locally identify the edges of the constrained half-$\Theta_6$-graph and distinguish them from other edges of $\Vis(P,S)$.

\begin{lemma} \label{lem:localnegedge}
 Let $u$ and $v$ be visible vertices such that $u \in \overline{C_0^v}$. Then $uv$ is an edge of the constrained half-$\Theta_6$-graph if and only if $v$ is the vertex whose projection on the bisector of $C_0^u$ is closest to $u$, among all vertices in $C_0^u$ visible to $v$ and not blocked from $u$ by constraints incident on $v$.
\end{lemma}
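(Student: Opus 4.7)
The plan is to prove both directions by contradiction, with Lemma~\ref{lem:ConvexChain} bridging what $u$ sees and what $v$ sees. Let $C_{0,j}^u$ denote the subcone of $u$ containing $v$. By definition, $uv$ is an edge of the constrained half-$\Theta_6$-graph exactly when $uv$ is a visibility edge and $v$ has the smallest projection on the bisector of $C_0^u$ among vertices of $C_{0,j}^u$ visible from $u$; write $T$ for this set and $S_L$ for the set described in the lemma. The driving geometric fact is that any triangle $uvw$ with $w\in C_{0,j}^u$ lies entirely inside $C_{0,j}^u$, so $v$ has the largest projection among its three vertices and any point strictly inside has projection strictly smaller than $v$.

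For the ``only if'' direction, suppose $uv$ is an edge and some $w\in S_L$ has projection below $v$. If $w\in C_{0,j}^u$, then $uw$ is blocked by some constraint $c$ (else $w\in T$ would contradict $v=\min T$); the constraint $c$ is incident on neither $u$ (constraints at $u$ only bound subcones and cannot properly cross a segment interior to one) nor $v$ (by $w\in S_L$). Applying Lemma~\ref{lem:ConvexChain} to triangle $uvw$ with apex $v$ then yields a convex chain of visibility edges from $u$ to $w$ whose vertex adjacent to $u$ lies strictly inside the triangle, hence in $C_{0,j}^u$ with projection below $v$; the segment from $u$ to it is a diagonal of the polygon certified empty by the lemma, so it is visible from $u$, contradicting $v=\min T$. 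If instead $w$ lies in a different subcone of $C_0^u$, let $uq$ be a constraint at $u$ separating the two subcones. Because $vw$ is a visibility edge it cannot cross segment $uq$, so it crosses the ray through $uq$ strictly beyond $q$; a short projection calculation using $|up|=\text{proj}(p)/\cos\alpha$ along this ray (with $\alpha$ the ray's angle to the bisector) then forces $\text{proj}(q)<\text{proj}(v)$, and since $q$ lies on the boundary of $C_{0,j}^u$ and is visible from $u$ via the constraint, $q\in T$ with smaller projection than $v$, again contradicting $v=\min T$.

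For the converse, suppose $v=\min S_L$ and some $w\in C_{0,j}^u$ visible from $u$ has projection below $v$. If $vw$ is a visibility edge, then $w$ itself lies in $S_L$ (since $uw$ being a visibility edge precludes any constraint from blocking it, in particular any incident on $v$), immediately contradicting $v=\min S_L$. Otherwise, Lemma~\ref{lem:ConvexChain} applies to triangle $uvw$ with apex $u$---its hypothesis holds since constraints at $u$ only bound subcones and cannot enter the interior of a triangle contained in $C_{0,j}^u$. The chain it produces gives a vertex $x_1$ adjacent to $v$, strictly inside the triangle (hence in $C_0^u$ with projection below $v$), visible from $v$ along a chain edge, and whose segment $ux_1$ is a diagonal of the polygon certified empty by Lemma~\ref{lem:ConvexChain}, so no constraint crosses it---in particular none incident on $v$. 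Hence $x_1\in S_L$ has projection below $v$, the desired contradiction.

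The step I expect to be the main obstacle is verifying the hypothesis of Lemma~\ref{lem:ConvexChain} in the forward, same-subcone case: a constraint incident on $v$ might terminate strictly inside $uvw$, so that $v$ is the endpoint of a constraint intersecting the interior and the lemma cannot be applied directly to triangle $uvw$ with apex $v$. I plan to handle this by replacing $w$ with the endpoint $p$ of such a constraint whose direction from $v$ is closest in angle to $vu$, and repeating the argument inside the sub-triangle $uvp$---where by the choice of $p$ no constraint incident on $v$ can enter the interior---while verifying that this reduction terminates (e.g.\ by induction on the number of constraint endpoints at $v$ lying inside the current triangle).
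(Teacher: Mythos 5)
Your proposal is correct and follows essentially the same route as the paper: both directions rest on Lemma~\ref{lem:ConvexChain}, applied with apex $v$ to convert a closer vertex visible to $v$ into one visible to $u$ in $v$'s subcone, and with apex $u$ for the converse, exactly as in the paper's proof. The only difference is organizational: the paper picks the witness $x$ to minimize the angle between $vx$ and $vu$ among all candidates, which at once dispatches your separate different-subcone case (the emptiness of the chain polygon forces the first chain vertex into the same subcone as $v$) and replaces the inductive reduction you propose for verifying the hypothesis of Lemma~\ref{lem:ConvexChain} (a constraint incident on $v$ entering the triangle would have its endpoint inside it, i.e.\ be a candidate with smaller angle, contradicting the choice of $x$).
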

\begin{proof}
We will prove the claim by contradiction. First, suppose that $v$ is not closest to $u$ among the vertices in $C_0^u$ visible to $v$ and not blocked by constraints incident on $v$ (see Fig.~\ref{fig:LocalConstruction}a). Then there are one or more vertices whose projection on the bisector is closer to $u$. Among those vertices, let $x$ be the one that minimizes the angle between $vx$ and $vu$. Note that $v$ cannot be the endpoint of a constraint intersecting the interior of triangle $uvx$, since the endpoint of that constraint would lie inside the triangle, contradicting our choice of $x$. Since both $uv$ and $vx$ are visibility edges, Lemma~\ref{lem:ConvexChain} tells us that there is a convex chain of visibility edges connecting $u$ and $x$ inside triangle $uvx$. In particular, the first vertex $y$ from $u$ on this chain is visible from both $u$ and $v$ and is closer to $u$ than $v$ is (in fact, $y$ must be $x$ by our choice of $x$). Moreover, $v$ must be in the same subcone of $u$ as $y$, since the region between $v$ and the chain is completely empty of both vertices and constraints. Thus, $uv$ cannot be an edge of the half-$\Theta_6$-graph.
 
 \begin{figure}[ht]
  \begin{center}
    \includegraphics{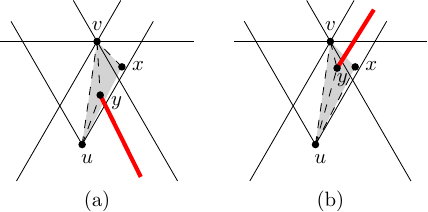}
  \end{center}
  \caption{(a) If $v$ is not closest to $u$ among the vertices visible to $v$, then $uv$ is not in the half-$\Theta_6$-graph. (b) If $v$ is closest to $u$ among the vertices visible to $v$, then $uv$ must be in the half-$\Theta_6$-graph.}
  \label{fig:LocalConstruction}
\end{figure}
 
Next, suppose that $v$ is closest to $u$ among the vertices visible to $v$ and not blocked by constraints incident on $v$, but $uv$ is not an edge of the half-$\Theta_6$-graph. Then there is a vertex $x \in C_0^u$ in the same subcone as $v$, who is visible to $u$, but not to $v$, and whose projection on the bisector is closer to $u$ (see Fig.~\ref{fig:LocalConstruction}b). Since $x$ and $v$ are in the same subcone, $u$ is not incident to any constraints that intersect the interior of triangle $uvx$. We now apply Lemma~\ref{lem:ConvexChain} to the triangle formed by visibility edges $uv$ and $ux$; this gives us that there is a convex chain of visibility edges connecting $v$ and $x$, inside triangle $uvx$. In particular, the first vertex $y$ from $v$ on this chain must be visible to both $u$ and $v$. And since $y$ lies in triangle $uvx$, it lies in $C_0^u$ and its projection is closer to $u$. But this contradicts our assumption that $v$ was the closest vertex. Thus, if $v$ is the closest vertex, and $uv$ must be an edge of the half-$\Theta_6$-graph.
\end{proof}

For completeness, we give the pseudocode of the complete procedure of determining whether a given edge in a negative (sub)cone is part of the constrained half-$\Theta_6$-graph in Algorithm~\ref{alg:GraphConstruction}. We note that it suffices to check using the two constraints that minimize the clockwise and counterclockwise angle with $uv$: If either constraint ends inside $\overline{C_0^v} \cap C_0^u$ it also serves as a witness that $uv$ is not an edge of the constrained half-$\Theta_6$-graph. Otherwise, neither ends in said region and they are the constraints that most restrict the visibility region of $u$ in $C_0^u$. Thus, they can be used to check the visibility and we do not have to test all constraints to do this. 

\begin{algorithm}
    \caption{Determining if $uv \in \overline{C_0^v}$ is part of the constrained half-$\Theta_6$-graph.}
    \begin{algorithmic}[1]
      \State Let \textsc{left} and \textsc{right} be the constraints incident to $v$ whose clockwise and counterclockwise angle with $uv$ is minimized
      \If{\textsc{left} or \textsc{right} in $C_0^u$}
        \State \Return \texttt{false}
      \Else
        \For{every edge $xv \in \overline{C_0^v} \cap C_0^u$}
          \If{$xv$ not blocked from $u$ by \textsc{left} or \textsc{right}}
            \State \Return \texttt{false}
          \EndIf
        \EndFor
      \EndIf
      \State \Return \texttt{true}
    \end{algorithmic}
    \label{alg:GraphConstruction}
\end{algorithm}

The running time of Algorithm~\ref{alg:GraphConstruction} depends highly on how the edges and constraints are stored at the vertices. If the edges and constraints are stored in arbitrary order, the procedure takes $O(C + \textrm{deg}(v))$ time, where $C$ is the number of constraints incident to $v$ and $\textrm{deg}(v)$ is the degree of $v$, since we need to loop over all of them in order to check the conditions. If the constraints are stored in clockwise or counterclockwise order around $v$, the time complexity is reduced to $O(\log C + \textrm{deg}(v))$, as we can use binary search to find the required constraints. If the edges are stored per subcone of $v$, we only need to loop through the edges in the subcone of $v$ that contains $u$, however in the worst case, this can still take $\textrm{deg}(v)$ time and the total time complexity remains $O(\log C + \textrm{deg}(v))$. 

Lemma \ref{lem:localnegedge} allows us to compute 1-locally which of the edges of $\Vis(P,S)$ incident on $v$ are also edges of the constrained half-$\Theta_6$-graph. Recall that this graph is plane~\cite{BFRV12Constrained}, thus we can apply FACE-1 or FACE-2 to route on $\Vis(P,S)$. The running time of the full routing procedure naturally depends on the way the edges and constraints are stored as well as whether FACE-1 or FACE-2 is used, and is the simple multiplication of the number of vertices visited and the total time spent per vertex. 

\begin{theorem}\label{theo_routing1}
  For any set $P$ of $n$ vertices and set $S$ of constraints on $P$, there exists a 1-local non-competitive routing algorithm on $\Vis(P,S)$ that visits only the edges of the constrained half-$\Theta_6$-graph. 
\end{theorem}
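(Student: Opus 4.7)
The plan is to simulate face routing on the constrained half-$\Theta_6$-graph, using Lemma~\ref{lem:localnegedge} as the local primitive that identifies which visibility edges incident on the current vertex lie in this subgraph. Since \cite{BFRV12Constrained} established that the constrained half-$\Theta_6$-graph is a plane connected spanner of $\Vis(P,S)$, and FACE-1 (or FACE-2) from \cite{KSU99,BMSU01} is a 1-local $O(1)$-memory algorithm that guarantees delivery on any plane connected geometric graph, combining the two ingredients yields the theorem.

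The core step is to argue that the edge-classification oracle is itself 1-local on $\Vis(P,S)$. At the current vertex $v$, we must list precisely those $u \in N_1(v)$ for which $uv$ belongs to the constrained half-$\Theta_6$-graph. Each such $u$ falls into one of two cases according to whether $v$ lies in a positive or negative cone of $u$. In the positive-cone case, $uv$ is a half-$\Theta_6$-edge exactly when $u$ is the projection-closest visible vertex to $v$ in its subcone of the relevant positive cone of $v$; this check uses only $N_1(v)$ and the constraints incident on $v$. In the negative-cone case, Lemma~\ref{lem:localnegedge} (invoked, by rotational symmetry, with the appropriate positive cone of $u$ playing the role of $C_0^u$) characterizes membership using only vertices visible to $v$ and the constraints incident on $v$ --- again, data present at $v$. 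Iterating over all $u \in N_1(v)$ therefore produces the complete list of half-$\Theta_6$-edges at $v$.

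With this oracle in place, we run FACE-1 on the implicit plane graph: at each step the algorithm consults the oracle to recover the cyclic order of half-$\Theta_6$-edges around the current vertex and advances to the next face-traversal edge accordingly. Only the standard $O(1)$-memory state of face routing (essentially the source-target segment $st$ and the current handover edge) is stored in the message, keeping the scheme 1-local with $O(1)$ memory. Termination and correctness then follow from the face-routing guarantees of \cite{KSU99,BMSU01}, and every edge traversed is by construction an edge of the constrained half-$\Theta_6$-graph.

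The main obstacle I foresee is checking that Lemma~\ref{lem:localnegedge} really is locally computable at $v$, and in particular that the ``visible to $v$'' quantifier in its statement ranges only over vertices already available in $N_1(v)$. This is immediate from the phrasing of the lemma, after which everything reduces to a routine combination of planarity, the spanner property, and standard face routing.
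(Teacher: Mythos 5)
Your proposal is correct and follows essentially the same route as the paper: the paper's own justification for Theorem~\ref{theo_routing1} is precisely that Lemma~\ref{lem:localnegedge} (together with the trivial positive-cone check) lets the current vertex identify 1-locally which incident visibility edges belong to the constrained half-$\Theta_6$-graph, after which planarity and the spanner property of that graph from~\cite{BFRV12Constrained} allow FACE-1 or FACE-2 to guarantee delivery. Your additional care in separating the positive- and negative-cone cases and in checking that the ``visible to $v$'' quantifier stays within $N_1(v)$ only makes explicit what the paper leaves implicit.
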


This algorithm routes on a subgraph of the constrained $\Theta_6$-graph, and in~\cite{BFRV2017RoutingJournal} it was shown that no deterministic local routing algorithm can be $o(\sqrt{n})$-competitive on this graph. Even worse, the competitive ratio of our approach cannot be bounded by any function of $n$, as it can traverse edges whose length is unrelated to (and thus much longer than) the shortest path. In fact, by applying FACE-1, it is possible to visit almost every edge of the graph four times before reaching the destination. It is worse with FACE-2, where almost every edge may be visited a linear number of times before reaching the destination. In the next section, we present a 1-local routing algorithm that is $O(n)$-competitive in the constrained setting and provide a matching worst-case lower bound.

\section{Routing on Constrained Triangulations}
In this section we look at routing on any constrained triangulation, i.e. a graph where all constraints are edges and all internal faces are triangles. Hence, we do not have to check while routing that the graph is a triangulation and we can focus our attention solely on the routing process. 

\subsection{Lower Bound}
\label{sec:lowerbound}
Given a triangulation $G$ and a source vertex $s$ and a destination vertex $t$, let $H$ be the subgraph of $G$ that contains all edges of $G$ that are part of a triangle that is intersected by $s t$. It is common for routing algorithms to restrict themselves to edges of $H$. Indeed, to the best of our knowledge, virtually every local routing strategy follows this restriction. In the unconstrained setting, this does not affect the quality of the path too much. For example, in the unconstrained Delaunay triangulation, $H$ always contains a path between $s$ and $t$ of length at most $2.42|st|$~\cite{KG92}. However, we show that this is no longer true in the constrained setting.

In particular, we show that if $G$ is a constrained Delaunay triangulation or a constrained half-$\Theta_6$-graph, the shortest path in $H$ can be a factor of $n/4$ times longer than that in $G$. This implies that any local routing algorithm that considers only the triangles intersected by $s t$ cannot be $o(n)$-competitive with respect to the shortest path in $G$ on every constrained Delaunay triangulation or constrained half-$\Theta_6$-graph on every pair of vertices. 
In the remainder of this paper, we use $\pi_G(u, v)$ to denote the shortest path from $u$ to $v$ in a graph $G$. 

\begin{lemma}
  There exists a constrained Delaunay triangulation $G$ with vertices $s$ and $t$ such that $|\pi_{H}(s, t)| \geq \frac{n}{4} \cdot |\pi_G(s, t)|$, where $H$ is the subgraph of $G$ consisting of all triangles intersected by the line segment $st$.
\end{lemma}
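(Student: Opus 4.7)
The plan is to prove the lemma by an explicit construction. I would exhibit a family of point sets with constraints, indexed by $n$, whose constrained Delaunay triangulation $G$ realizes the claimed ratio. The core idea is to design the configuration so that segment $st$ crosses a ``corridor'' of triangles whose bounding edges are all long, while simultaneously arranging auxiliary vertices to furnish a short alternate path in $G$ consisting only of edges that lie in triangles on one side of $st$ and therefore do not appear in $H$.

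Concretely, I would place $s$ and $t$ at unit distance and surround the midpoint of $st$ by a long ``barrier'' constraint whose endpoints lie far above and far below $st$, forcing the two triangles of $G$ immediately adjacent to the barrier to have an edge of length $\Omega(n)$. The remaining $n-O(1)$ vertices form an auxiliary cluster positioned so that $G$ contains a short chain of Delaunay edges connecting $s$ to $t$ that goes around the barrier by hugging one of its endpoints closely, while each cluster edge's two incident triangles lie strictly on the same side of $st$. Once this structure is verified, it is straightforward to read off $|\pi_G(s,t)| = O(1)$ from the cluster path and $|\pi_H(s,t)| = \Omega(n)$ from the fact that every walk in $H$ must use at least one barrier-incident edge of length $\Omega(n)$.

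The main obstacle is ensuring that the auxiliary-cluster edges avoid $H$. Because $s$ and $t$ are endpoints of segment $st$, every triangle of $G$ incident to them ``touches'' $st$ at a vertex; under the standard interpretation where a triangle is intersected by $st$ only if its interior meets $st$, exactly one triangle at $s$ and one at $t$ are added to $H$ by virtue of containing an initial sub-segment of $st$ in their interior. I would place the auxiliary cluster so that the unique triangle at $s$ in which the initial direction of $st$ lies is the long barrier triangle (reaching the far endpoint), whereas the cluster bypass triangles at $s$ sit outside the angular cone of the $st$-direction and are therefore not in $H$; the same argument applies at $t$ by symmetry. Verifying the Delaunay empty-circle condition for the cluster edges simultaneously with this angular condition is the crux of the construction; once established, tuning the barrier height relative to $|st|$ and to the cluster's length yields the precise constant $n/4$.
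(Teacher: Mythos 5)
There is a fatal structural flaw in your construction: a single long barrier cannot separate the lengths of $\pi_H(s,t)$ and $\pi_G(s,t)$ by more than a constant factor. If the barrier is a constraint $ab$ crossing $st$ with $a$ at distance $D = \Omega(n)$ above $st$ and $b$ at distance $\Omega(n)$ below, then \emph{no} edge of $G$ crosses $ab$, so every path from $s$ to $t$ in $G$ --- including your auxiliary-cluster path --- must topologically go around one of the endpoints $a$ or $b$. (The union of the segment $ab$ with the two rays extending it beyond $a$ and beyond $b$ is a line separating $s$ from $t$, so any avoiding path must reach a point beyond $a$ or beyond $b$ on that line.) Hence $|\pi_G(s,t)| \geq 2D - |st| = \Omega(n)$, contradicting your claim that the cluster furnishes a path of length $O(1)$. ``Hugging an endpoint closely'' does not help, because the endpoint itself is far from both $s$ and $t$. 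With one barrier the ratio $|\pi_H|/|\pi_G|$ is therefore $O(1)$, and no amount of tuning the barrier height or verifying empty-circle conditions can recover the factor $n/4$. Your careful discussion of which triangles at $s$ and $t$ belong to $H$ is a side issue; the quantitative core of the lemma is what fails.

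To get a ratio of $\Theta(n)$ you need the path in $H$ to be forced through $\Theta(n)$ \emph{separate} long edges while the path in $G$ pays for a long detour only $O(1)$ times. The paper achieves this with a ladder: two columns of $n/2-1$ vertices with a horizontal constraint on each rung, alternate rows shifted by almost half a unit, the whole configuration stretched horizontally by a large factor $x$, and $s$, $t$ placed below and above so that $st$ crosses every rung. Then $H$ contains only the rung-adjacent triangles, forcing a zigzag through $n/2$ edges of length about $x$, while $G$ retains the short vertical boundary edges, giving a path of length about $2x + n/2$; letting $x \to \infty$ yields the ratio $n/4$. The essential idea you are missing is this linear number of constraints, each individually cheap to bypass in $G$ via an edge excluded from $H$, but each individually expensive to cross inside $H$.
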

\begin{proof}
  In the following we show how to pick a set of points and constraints whose constrained Delaunay graph satisfies the properties stated in the Lemma. For ease of description and calculation, we assume that the size of the point set is a multiple of 4. Note that we can remove this restriction by adding 1, 2, or 3 vertices ``far enough away'' from the construction so that they do not influence the shortest path. 
  
  We start with two columns of $n/2 - 1$ vertices each, aligned on a grid. We add a constraint between every horizontal pair of vertices. Next, we shift every other row by slightly less than half a unit to the right (let $\varepsilon>0$ be the small amount that we did not shift). We also add a vertex $s$ below the lowest row and a vertex $t$ above the highest row, centered between the two vertices on said row. Note that this placement implies that $s t$ intersects every constraint. Finally, we stretch the point set by an arbitrary factor $2x$ in the horizontal direction, for some arbitrarily large constant $x$. When we construct the constrained Delaunay triangulation on this point set, we get the graph $G$ shown in Fig.~\ref{fig:LowerBound}. 

\begin{figure}[h]
  \begin{center}
    \includegraphics{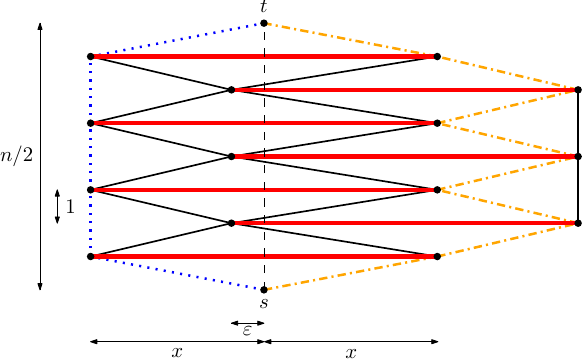}
  \end{center}
  \caption{Lower bound construction: the shortest path in $H$ (orange and dash-dotted) is about $n/4$ times as long as the shortest path in $G$ (blue and dotted). Constraints are shown in thick red and the remaining edges of $G$ are shown in solid black.}
  \label{fig:LowerBound}
\end{figure}

  In order to construct the graph $H$, we note that all edges that are part of $H$ lie on a face that has a constraint as an edge. In particular, $H$ does not contain any of the vertical edges on the left and right boundary of $G$. Hence, all that remains is to compare the length of the shortest path in $H$ to that in $G$. 
  
  Ignoring error terms up to $O(1)$, the shortest path in $H$ uses $n / 2$ edges of length $x$, hence it has length $x \cdot n/2$. Graph $G$ on the other hand contains a path of length $2 x + n/2 - 1$ (again, ignoring small terms that depend on $\varepsilon$), by following the path to the leftmost column and following the vertical path up. Hence, the ratio $|\pi_{H}(s, t)| / |\pi_G(s, t)|$ approaches $n/4$, since 
  $\lim_{x \rightarrow \infty} \frac{x \cdot \frac{n}{2}}{2 x + \frac{n}{2} - 1} = \frac{n}{4} \,.$
\end{proof}

Note that the above construction is also the constrained half-$\Theta_6$-graph of the given vertices and constraints.

\begin{corollary}
  There exist triangulations $G$ such that no local routing algorithm that considers only the triangles intersected by $s t$ is $o(n)$-competitive when routing from $s$ to $t$.
\end{corollary}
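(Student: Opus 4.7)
The plan is to derive this corollary as a direct consequence of the preceding lemma. The construction in that lemma already exhibits a constrained Delaunay triangulation $G$ (which is also a constrained half-$\Theta_6$-graph) together with vertices $s$ and $t$ for which $|\pi_H(s,t)| \geq (n/4) \cdot |\pi_G(s,t)|$, where $H$ is the subgraph of $G$ formed by the edges of triangles that intersect the segment $st$. Since the corollary asserts the existence of a family of bad triangulations, it suffices to re-use this single construction.

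First, I would observe that any routing algorithm which, by assumption, only considers triangles intersected by $st$ is forced to traverse only edges of $H$. Consequently, the path produced by the algorithm from $s$ to $t$ is a walk in $H$, and its total length is at least the length of the shortest path in $H$, namely $|\pi_H(s,t)|$. This is a deterministic, information-theoretic restriction and does not depend on how clever the algorithm is with the constant memory or the 1-local view it has.

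Combining these two facts, the length of the routed path is at least $|\pi_H(s,t)| \geq (n/4)\cdot|\pi_G(s,t)|$, so the competitive ratio with respect to $\pi_G(s,t)$ is at least $n/4$. Hence no such routing algorithm can be $o(n)$-competitive on the construction of the lemma, which gives the claim. There is essentially no obstacle here beyond noting that the lemma's lower bound on shortest paths in $H$ automatically upper-bounds the performance of every algorithm that is syntactically confined to $H$; the corollary is immediate once this restriction is made explicit.
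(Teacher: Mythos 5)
Your proposal is correct and matches the paper's (implicit) argument exactly: the corollary is stated as an immediate consequence of the preceding lemma, since any algorithm confined to the triangles intersected by $st$ produces a walk in $H$ of length at least $|\pi_H(s,t)| \geq \frac{n}{4}\cdot|\pi_G(s,t)|$. Nothing further is needed.
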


In fact, the construction depicted in Fig.~\ref{fig:LowerBound} shows that there exist point sets and constraints, such that the shortest path between $s$ and $t$ in \emph{every} triangulation on this point set (subject to the constraints) has length a linear factor shorter than the shortest path in $H$. 

\begin{lemma}
  There exist point sets $P$ (including vertices $s$ and $t$) and constraints $S$ such that in every constrained triangulation $G$ on $P$ subject to $S$, the shortest path between $s$ and $t$ in $H$ is not an $o(n)$-approximation of the shortest path in $G$, where $H$ is the subgraph of $G$ consisting of all triangles intersected by the line segment $st$.
\end{lemma}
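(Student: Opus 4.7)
The plan is to reuse the construction of the previous lemma and show, for every constrained triangulation $G$ on that point set, that $|\pi_G(s,t)| = O(x+n)$ and $|\pi_H(s,t)| = \Omega(xn)$. Dividing and letting $x$ grow gives a ratio $\Omega(n)$, which in particular is not $o(n)$.

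For the upper bound I would use only the convex hull of $P$. The leftmost points of $P$ are precisely the odd-row leftmost vertices, all at $x=0$, and together with $s$ and $t$ they form the left boundary of $\operatorname{conv}(P)$. Hence the path $s \to (0,1) \to (0,3) \to \cdots \to t$ consists entirely of convex-hull edges, which belong to every triangulation, and has total length $\approx 2x + n/2$.

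For the lower bound I would classify the edges that can possibly lie in $H$. After the horizontal stretch, the vertices of $P$ lie in four columns whose pairwise $x$-distance is $\Omega(x)$, so every edge joining distinct columns has length $\Omega(x)$. The only potentially short edges are vertical ones within a single column; the two inner columns (at $x = x - 2x\varepsilon$ and $x = 2x$) admit no such edges because the constraint of the intermediate row crosses them, while the two outer columns (at $x = 0$ and $x = 3x - 2x\varepsilon$) admit vertical edges of length $2$ that are also convex-hull edges. The key claim is that these outer-column hull edges never belong to $H$: for the left-column edge $e = (0,i) - (0, i+2)$ with $i$ odd, being a hull edge means exactly one interior triangle of $G$ is incident to $e$, with some third vertex $V$, and a case analysis on the column and row of $V$ shows that any valid choice has $V_x \leq x - 2x\varepsilon$ (an interior $V$ with $V_x > x$ would either have an edge to $(0,i)$ or $(0, i+2)$ that crosses an intermediate row's constraint, or would force the triangle to contain the inner-left-column vertex $(x - 2x\varepsilon, i+1)$ in its interior). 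Hence the triangle lies strictly to the left of $st$ and does not meet it, so $e \notin H$. A symmetric argument applies to the right column, and essentially the same analysis shows that no edge of vertical extent $\geq 2$ can belong to $H$. It follows that every edge in $H$ has length $\Omega(x)$ and spans at most one row, so the shortest $s$-to-$t$ path in $H$ uses $\Omega(n)$ edges and has length $\Omega(xn)$.

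The main obstacle is verifying this case analysis: the interplay of the four columns with the alternating constraint ranges $[0, 2x]$ (odd rows) and $[x - 2x\varepsilon, 3x - 2x\varepsilon]$ (even rows) must be checked for each candidate third vertex $V$, but the number of candidates is small and each rules out quickly. Once this is in place, combining the two bounds gives the ratio $\Omega(xn)/O(x+n)$, which tends to $\Omega(n)$ as $x \to \infty$, establishing the lemma.
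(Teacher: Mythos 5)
Your proposal is correct and follows essentially the same route as the paper: both rely on the convex-hull path along the left boundary surviving in every triangulation, and both show that the outer-column vertical hull edges cannot lie in $H$ because visibility and triangle-emptiness force the unique incident triangle's third vertex to be the near endpoint of the intermediate constraint, so that triangle stays entirely on one side of $st$. Your write-up merely makes explicit the counting step (every edge of $H$ has length $\Omega(x)$ and climbs at most one row, so $\Omega(n)$ edges are needed) that the paper leaves implicit, and states the third-vertex argument in contrapositive form relative to the paper's.
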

\begin{proof}
Since any triangulation contains the edges of the convex hull of the point set, we observe that the shortest path in Fig.~\ref{fig:LowerBound} remains part of any triangulation. Hence, for $H$ to contain a shortest path of length comparable to the shortest path in the full triangulation, it needs to contain some vertical edge on the left or right boundary of the graph. We show that $H$ can contain no such edge. 

Consider an edge $u v$ on the vertical boundary of the triangulation and let $w$ be the third vertex of this triangle. Since in the construction the shifted constraints are shifted less than half a unit, the only vertices visible to both $u$ and $v$ are endpoints of a constraint whose $y$-coordinate lies between those of $u$ and $v$. Since $uvw$ is part of $H$, it intersects $st$, hence $w$ lies on the opposite side of $st$ compared to $u$ and $v$. This implies that the other endpoint of the constraint with endpoint $w$ is contained in $uvw$ and thus $uvw$ is not a triangle of the triangulation. 
\end{proof}

\subsection{Upper Bound}
Next, we provide a simple local routing algorithm that is $O(n)$-competitive. If we are only interested in routing on $H$, Bose and Morin~\cite{BM2004} introduced a routing algorithm for this setting. This routing algorithm, called the {\em Find-Short-Path} routing, is designed precisely to route on the graph created by the union of the triangles intersected by the line segment between the source and destination (i.e., graph $H$). The algorithm is 1-local and 9-competitive; that is, it reaches $t$ after having travelled at most 9 times the length of the shortest path from $s$ to $t$ in $H$, while considering only the neighbors of the current vertex. 

In the following, we show that this algorithm is also competitive in any triangulation. 

\begin{theorem}
  For any triangulation, there exists a 1-local $O(n)$-competitive routing algorithm that visits only triangles intersected by the line segment between the source and the destination. 
\end{theorem}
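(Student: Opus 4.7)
The plan is to use the Find-Short-Path routing algorithm of Bose and Morin~\cite{BM2004} unchanged. That algorithm is 1-local, visits only triangles of $H$, and produces an $s$-to-$t$ walk of length at most $9\,|\pi_H(s,t)|$. Consequently, the entire theorem reduces to establishing the structural inequality
\[
  |\pi_H(s,t)| \leq O(n)\cdot|\pi_G(s,t)|,
\]
which multiplied by the constant $9$ yields the claimed $O(n)$-competitive routing ratio on $G$ and matches the $\Omega(n)$ lower bound of the previous subsection.

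To prove the structural inequality, I would let $T_1,\ldots,T_m$ denote the triangles of $G$ intersected by $st$ in order from $s$ to $t$; since each $T_i$ contributes at most a constant number of new vertices to $H$, we have $m = O(n)$. The region $H = \bigcup_i T_i$ is a simply connected triangulated polygon containing $st$, with $s$ lying on $T_1$ and $t$ lying on $T_m$. I would then construct an explicit walk $\pi^*$ inside $H$ that advances greedily through the triangles in order, using at most two edges of each $T_i$, so that $\pi^*$ has $O(n)$ edges in total. Since $|\pi_H(s,t)| \leq |\pi^*|$, it suffices to bound $|\pi^*|$.

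The hard part will be bounding the length of each edge of $\pi^*$ by $O(|\pi_G(s,t)|)$. My plan is to exploit planarity of $G$: an edge $uv$ of $H$ together with an appropriate sub-chain of $\pi^*$ bounds a closed region that any $s$-to-$t$ path in $G$ must either cross through or encircle, and any such detour forces $\pi_G(s,t)$ to have length $\Omega(|uv|)$. Hence $|uv| \leq O(|\pi_G(s,t)|)$ for every edge used by $\pi^*$. Summing over the $O(n)$ edges of $\pi^*$ then gives $|\pi_H(s,t)| \leq O(n)\cdot|\pi_G(s,t)|$, and combining with Find-Short-Path's $9$-competitiveness completes the proof. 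The worst-case construction of Fig.~\ref{fig:LowerBound} confirms that the linear factor is tight, so no further improvement is available through this route.
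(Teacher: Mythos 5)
Your reduction to Find-Short-Path is exactly the paper's first step, so everything hinges on the structural inequality $|\pi_H(s,t)|\le O(n)\cdot|\pi_G(s,t)|$, and there your argument has a genuine gap: the claim that every edge $uv$ used by your walk $\pi^*$ satisfies $|uv|\le O(|\pi_G(s,t)|)$ is false for edges of $H$ in general, and the planarity argument you sketch does not establish it. Consider a fan: $s=(0,0)$, $t=(1,0)$, a chain of vertices $a_1,\dots,a_k$ just below the segment $st$, and a single apex $w=(1/2,M)$ with $M$ arbitrarily large, triangulated by the triangles $w s a_1, w a_1 a_2,\dots, w a_k t$. Every one of these triangles is intersected by $st$, so the edges $wa_i$ (of length about $M$) all belong to $H$, while $|\pi_G(s,t)|\approx 1$ via the bottom chain. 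The separation argument you appeal to (the closed curve $\pi_G(s,t)\cup ts$ must separate the two endpoints of any edge crossed by $st$) only shows that \emph{one} endpoint of each such edge lies within distance $O(|\pi_G(s,t)|)$ of $s$; the other endpoint, and hence the edge length, can be unbounded in terms of $|\pi_G(s,t)|$. Consequently a walk that ``advances through the triangles using at most two edges of each'' can be arbitrarily non-competitive (e.g.\ $s\to w\to t$ above), and nothing in your sketch either forces the walk to pick the short edges or proves that the edges it picks are short.

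The paper closes exactly this gap with machinery your proposal is missing. It introduces an auxiliary graph $H'$ ($H$ plus its convex-hull edges and the visibility edges inside the newly created faces), proves $|\pi_{H'}(s,t)|\le|\pi_G(s,t)|$ by a homotopy argument (Lemma~\ref{lem:VirtualShort}; this is needed because $\pi_G(s,t)$ may leave the union of the intersected triangles entirely and re-enter, so one cannot argue edge by edge inside $H$), and then simulates each edge of $\pi_{H'}(s,t)$ by a walk along a pocket of $H$. The crucial length bound is global rather than local: each edge of the simulating walk lies inside the polygon bounded by a piece of $st$ and the corresponding piece of $\pi_{H'}(s,t)$, hence has length at most half that polygon's perimeter, i.e.\ at most $|\pi_{H'}(s,t)|\le|\pi_G(s,t)|$ (Lemma~\ref{lem:VirtualLinearBound}); shortcutting then yields at most $n-1$ such edges. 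Your ``encircling'' intuition is the right germ of the homotopy argument, but it must be applied to the shortest path in the augmented graph $H'$ and to pockets of $H$, not to individual edges of the triangles crossed by $st$.
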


The remainder of the section is dedicated to showing that in any triangulation $G$ the shortest path between $s$ and $t$ in $H$ is an $O(n)$-approximation of the shortest path in $G$. To make the analysis easier, we use an auxiliary graph $H'$ defined as follows: let $H'$ be the graph $H$, augmented with the edges of the convex hull of $H$. These convex hull edges create additional faces, which we call {\em internal faces}. Notice that, by construction of $H$, each internal face can be associated with a unique edge of the convex hull. We call this edge the {\em defining edge}. Next, we add all visibility edges between vertices on the same internal face. For these visibility edges, we only consider constraints with both endpoints in $H$. The different graphs $G$, $H$, and $H'$ are shown in Fig.~\ref{fig:ConstructingH}. We emphasize that $H'$ is an auxiliary graph that will only be used to bound the spanning ratio between the other two graphs.

\begin{figure}[h]
  \begin{center}
    \includegraphics{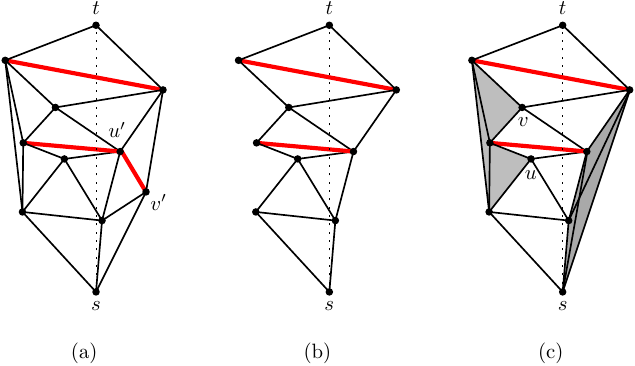}
  \end{center}
  \caption{The three different graphs: (a) The original triangulation $G$, (b) the subgraph $H$ containing only the triangles that intersect the segment $st$, (c) graph $H'$ constructed by adding convex hull edges to $H$ and visibility edges of the internal faces (gray regions in the figure). Note that edge $u v$ is not added, since visibility is blocked by a constraint that has both endpoints in $H$. Further note that in the right gray region we add ``illegal'' edges that cross the constraint $u'v'$.}
  \label{fig:ConstructingH}
\end{figure}

We start by comparing the length of the shortest paths in $H'$ and $G$. 

\begin{lemma}
\label{lem:VirtualShort}
  Any triangulation $G$ satisfies $|\pi_{H'}(s, t)| \leq |\pi_G(s, t)|$. 
\end{lemma}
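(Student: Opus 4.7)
The plan is to take a shortest path $P=\pi_G(s,t)$ in $G$ and construct a walk in $H'$ of length at most $|P|$; the inequality $|\pi_{H'}(s,t)|\le|\pi_G(s,t)|$ then follows. Let $R$ denote the closed region formed by the triangles of $G$ intersected by $st$, so that $H$ consists precisely of the edges of $G$ inside $R$, and let $\mathcal{C}$ be the set of constraints of $G$ with both endpoints in $V(H)$---the only constraints that $H'$ treats as visibility obstacles. Every edge of $G$ either belongs to a triangle of $R$ (and hence is an edge of $H$) or has its interior disjoint from $\mathrm{int}(R)$, so we may split $P$ into maximal sub-paths lying wholly in $R$ or wholly in the complement of $\mathrm{int}(R)$. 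The sub-paths in $R$ are already walks in $H\subseteq H'$, with the transitions occurring at vertices of $V(H)$ lying on $\partial R$. The task therefore reduces to replacing each ``outside'' sub-path $Q$, with endpoints $u,v\in V(H)\cap\partial R$, by a walk in $H'$ of length at most $|Q|$.

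The replacement will be the Euclidean geodesic $\gamma$ from $u$ to $v$ in the polygonal domain $\mathrm{CH}(V(H))\setminus\mathrm{int}(R)$ with $\mathcal{C}$ as its obstacles. Since $Q$ lies in $\mathbb{R}^2\setminus\mathrm{int}(R)$ and avoids every constraint of $G$---a strict superset of $\mathcal{C}$---the sub-path $Q$ is itself feasible for this geodesic problem, so $|\gamma|\le|Q|$. We may further assume $\gamma$ stays inside $\mathrm{CH}(V(H))$: its endpoints and all obstacles in $\mathcal{C}$ lie in this convex set, so any excursion outside could be replaced by a chord inside without increasing length.

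It remains to verify that $\gamma$ is a walk in $H'$. Standard results on Euclidean shortest paths tell us that $\gamma$ is a polyline whose breakpoints are reflex vertices of the domain boundary or endpoints of obstacles; in our setting these are convex corners of $\partial R$ and endpoints of constraints in $\mathcal{C}$, all of which lie in $V(H)$. Each maximal segment of $\gamma$ therefore connects two vertices of $V(H)$, avoids $\mathrm{int}(R)$ and every constraint in $\mathcal{C}$, and lies inside a single internal face of the intermediate graph $H^*:=H\cup\{\text{convex-hull edges of }V(H)\}$. Consequently each segment is one of: an edge of $H$ on $\partial R$, a convex-hull edge of $V(H)$, or a visibility edge inside a pocket face of $H^*$---each of which is an edge of $H'$ by construction. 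Concatenating these geodesic shortcuts with the in-$R$ pieces of $P$ produces the required walk in $H'$ of length at most $|P|$.

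The chief technical difficulty is verifying that every segment of $\gamma$ truly lies inside a single face of $H^*$. The worry is that a segment could ``graze'' $\partial R$ or pass through a vertex of $V(H)$ shared by two pocket faces without $\gamma$ having to turn there, letting it straddle two faces of $H^*$. Ruling this out requires a local analysis at reflex corners of $\partial R$ and at pocket junctions, leveraging the minimality of $\gamma$ and the fact that every obstacle in $\mathcal{C}$ has both endpoints in $V(H)$.
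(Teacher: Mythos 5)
Your overall strategy is close in spirit to the paper's (replace the portions of $\pi_G(s,t)$ that leave the corridor $R$ by geodesics realizable in $H'$), but you package it differently: a single Euclidean-geodesic argument in the complement of $\mathrm{int}(R)$ with the constraints of $\mathcal{C}$ as obstacles, whereas the paper splits case $(ii)$ into ``$u,v$ on a common face'' (handled by the geodesic inside that one face) and ``$u,v$ on different faces'' (handled by walking along the outer boundary of $H'$ in the homotopy class of $\pi'$, chosen via the parity of crossings with a downward ray from $s$). Your unified geodesic would subsume the paper's homotopy argument if it were completed, which is a genuine simplification in principle.

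However, there are two real problems. First, a step that fails as stated: you claim $Q$ is feasible for the geodesic problem in the domain $\mathrm{CH}(V(H))\setminus\mathrm{int}(R)$, but $Q$ consists of vertices of $G$ outside $V(H)$ and can perfectly well leave $\mathrm{CH}(V(H))$ (this is exactly the situation in the paper's second sub-case, where $\pi'$ detours around the outside of $H'$). To get $|\gamma|\le|Q|$ you must pose the geodesic problem in $\mathbb{R}^2\setminus\mathrm{int}(R)$ with obstacles $\mathcal{C}$, and then argue separately that the geodesic stays in $\mathrm{CH}(V(H))$ (e.g., because its breakpoints all lie in $V(H)$); your ``replace an excursion by a chord'' remark does not work either, since a chord of the hull may enter $\mathrm{int}(R)$ or cross a constraint. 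Second, and more importantly, the decisive claim --- that every maximal segment of $\gamma$ is an edge of $H'$, i.e., joins two vertices on a common internal face of $H\cup\mathrm{CH}(V(H))$ --- is exactly where the content of the lemma lies, and you explicitly defer it (``the chief technical difficulty \ldots requires a local analysis''). Because your geodesic ranges over \emph{all} pockets at once, a segment can pass through pinch vertices shared by two pockets, run along edges of $\partial R$, or graze reflex corners, and until each such segment is either identified with an edge of $H^*$ or confined to a single face (after splitting at interior vertices of $V(H)$), no walk in $H'$ has been produced. As written, the proposal is an outline with its central verification missing rather than a proof.
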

\begin{proof}
  First consider the case where every vertex along $\pi_G(s, t)$ is part of $H'$. In this case, we claim that every edge of $\pi_G(s, t)$ is also part of $H'$. Clearly, if an edge $u v$ of $\pi_G(s, t)$ is part of a triangle intersected by $s t$, then it is included in $H$ (and therefore in $H'$). If $u v$ is not part of a triangle intersected by $s t$, then $u$ and $v$ must lie on the same internal face of $H'$ before we add the visibility edges (since otherwise the edge $u v$ would violate the planarity of $G$). Since $u v$ is an edge of $G$, $u$ and $v$ can see each other. Hence, the edge $u v$ is added to $H'$ when the visibility edges are added to the internal faces. Therefore, every edge of $\pi_G(s, t)$ is part of $H'$ and thus $|\pi_{H'}(s, t)| \leq |\pi_G(s, t)|$.
  
  In the general case not every vertex of $\pi_G(s, t)$ is part of $H'$. In this case we partition $\pi_G(s, t)$ into smaller subpaths so that each subpath satisfies either $(i)$ all vertices are in $H'$, or $(ii)$ only the first and last vertex of the subpath are in $H'$. Using an argument analogous to the previous case, it can be shown that subpaths $(u, v)$ of $\pi_G(s, t)$ that satisfy $(i)$ use only edges that are in $H'$ and thus that $|\pi_{H'}(u, v)| \leq |\pi_G(u, v)|$. 
  
  To complete the proof, it remains to show that given a subpath $\pi'$ that satisfies $(ii)$, there exists a different path in $H'$ that connects the two endpoints of $H'$ and has length at most $|\pi'|$. Let $u$ and $v$ be the first and last vertex of $\pi'$, and consider first the case where $u$ and $v$ lie on the same internal face (see Fig.~\ref{fig:PathNotInH}). $H'$ contains all visibility edges that are not blocked by constraints with {\em both} endpoints in $H'$. In particular, it will contain the geodesic $\pi_{H'}$ (i.e., the shortest possible path that avoids these constraints) between $u$ and $v$. On the other hand, path $\pi'$ uses only edges of $G$ which by definition do not cross any constraints of $S$. Hence, this implies in particular that $\pi'$ does not cross any constraint that has both endpoints in $H$ and we conclude that the path $\pi'$ cannot be shorter than $\pi_{H'}$.

  \begin{figure}[ht]
  \begin{minipage}[t]{0.45\linewidth}
    \begin{center}
      \includegraphics{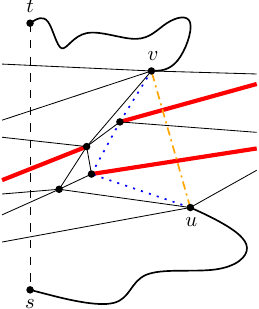}
    \end{center}
    \caption{A subpath of $\pi_G(s, t)$ (dotted blue) that satisfies condition $(ii)$: no vertex other than its endpoints are in $H'$. The two endpoints are connected in $H'$ (dot dashed orange path) and thus it has a shorter path in $H'$.} 
    \label{fig:PathNotInH}
  \end{minipage}
  \hspace{0.05\linewidth}
  \begin{minipage}[t]{0.45\linewidth}
    \begin{center}
      \includegraphics{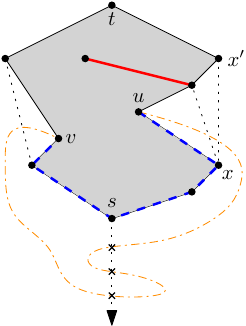}
    \end{center}
    \caption{When $\pi'$ (dot dashed orange) does not pass through any vertex of $H'$ (other than $u$ and $v$), we walk along the outer boundary of $H'$ to get a shorter path $\pi_{H'}$ (thick dashed blue). Note that we ignore some edges of $H'$ (dotted in the figure) in order to have $u$ and $v$ on the outer boundary.}
    \label{fig:SimulatingPath2}
  \end{minipage}
\end{figure}

  Finally, it remains to consider the case where $u$ and $v$ lie on different internal faces. Let $F$ be an internal face containing $u$.\footnote{Vertex $u$ can be in two internal faces if it is in the convex hull. In this case, we pick $F$ as either of the two faces.} Let $x$ and $x'$ be the endpoints of the defining edge of $F$. Consider the shortest path in $H'$ connecting $u$ with $x$ and $x'$ and virtually remove all edges from $F$ that do not belong to either path. We apply the same procedure to $v$. After this modification both $u$ and $v$ lie on the outer boundary of the pruned $H'$. We construct $\pi_{H'}$ by walking from $u$ to $v$ along this outer boundary. Note that there are two possible paths, clockwise or counterclockwise along the boundary; the path we choose will depend on $\pi'$.
  
  Without loss of generality, assume that $s$ is at the origin, $t=(0,1)$, and $u$ lies to the right of $s$ and $t$. We also assume that the clockwise path from $u$ to $v$ passes through $s$ (see Fig.~\ref{fig:SimulatingPath2}). We observe that since $\pi_G(s, t)$ is a shortest path in $G$ from $s$ to $t$, $\pi'$ is simple (i.e., no vertex is visited more than once). 
  
Next, consider $\pi'$ and recall that it satisfies $(ii)$ and thus no vertex along $\pi'$ other than $u$ and $v$ can be in $H'$. This implies that $\pi'$ cannot contain any vertex of the outer boundary of $H'$. We count the number times $\pi'$  crosses the downwards ray from $s$; if the number of crossings is odd, we construct $\pi_{H'}$ by walking clockwise from $u$ to $v$. Otherwise, we walk counterclockwise instead. Since we assumed that the clockwise path from $u$ to $v$ passes through~$s$ and $\pi'$ is simple, both $\pi'$ and $\pi_{H'}$ must have the same homotopy (if we virtually consider the outer boundary of $H'$ as an obstacle). Moreover, $\pi_{H'}$ is the shortest possible path having the same homotopy as $\pi'$. We conclude that $|\pi_{H'}| \leq |\pi'|$. 

Since for all subpaths $(u, v)$ of $\pi_G(s, t)$, we showed that $|\pi_{H'}(u, v)| \leq |\pi_G(u, v)|$, it follows that $|\pi_{H'}(s, t)| \leq |\pi_G(s, t)|$.  
\end{proof}

Next, we show that the length of the shortest path in $H$ has length at most $n - 1$ times the length of the shortest path in $H'$. 

\begin{lemma}
\label{lem:VirtualLinearBound}
  Any triangulation $G$ satisfies $|\pi_{H}(s, t)| \leq (n-1) \cdot |\pi_{H'}(s, t)|$. 
\end{lemma}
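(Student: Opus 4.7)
The plan is to convert $\pi_{H'}(s,t)$ into a walk $W$ in $H$ from $s$ to $t$ whose length is at most $(n-1)\cdot|\pi_{H'}(s,t)|$. Since $|\pi_H(s,t)|\le|W|$, this proves the lemma.

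I would process the edges of $\pi_{H'}(s,t)$ one by one. For each edge $e=uv$: if $e\in H$, simply add $e$ to $W$; if $e\in H'\setminus H$, then $e$ is either a convex-hull edge of $H$ or a visibility edge inside an internal face $F$ of $H$ together with the convex hull of $H$ (where $F$ is not a face of $H$ itself). In the latter case $u$ and $v$ both lie on the boundary of $F$, so I replace $e$ in $W$ by the chain of edges of $H$ tracing the boundary of $F$ between $u$ and $v$ on the side lying on the boundary of $H$. The result is a walk $W\subseteq H$ from $s$ to $t$.

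To bound $|W|$ I would combine two observations. First, $\pi_{H'}(s,t)$ is a simple path in an $n$-vertex graph and so uses at most $n-1$ edges. Second, because $H$ is contained in its convex hull, the portion of the boundary of each pocket $F$ that lies on the boundary of $H$ forms a convex chain relative to the hull edges on the opposite side of $F$; this convex structure geometrically constrains how long the replacement chain for a shortcut edge $e$ can be relative to $|e|$. Combined with the fact that the boundaries of distinct pockets are pairwise disjoint, so that the total vertex count across all replacement chains is at most $n$, these observations should yield the claimed bound after summing contributions over all edges of $\pi_{H'}(s,t)$.

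The main obstacle is making the geometric observation quantitative. In isolation, a convex chain can be arbitrarily longer than its spanning chord, so a naive per-edge bound of the form ``chain length $\le c\cdot|e|$'' with $c$ bounded is false. The proof must instead amortize the replacement lengths across all shortcut edges of $\pi_{H'}(s,t)$ lying in the same pocket, exploiting that $\pi_{H'}(s,t)$ is a shortest path and hence its visibility chords inside a single pocket form a taut, geodesic-like sub-sequence whose total length is at least the Euclidean span from the entry vertex to the exit vertex of that pocket. Once this amortized per-pocket estimate is combined with the at-most-$(n-1)$ edge count for $\pi_{H'}(s,t)$ and the global vertex budget of $n$, the inequality $|W|\le(n-1)\cdot|\pi_{H'}(s,t)|$ should follow.
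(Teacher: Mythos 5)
Your reduction---replace each edge of $\pi_{H'}(s,t)$ that is missing from $H$ by the pocket boundary of $H$ between its endpoints---is the same first step as the paper's. But the quantitative step is genuinely missing, and you have correctly diagnosed the problem with your own plan: no per-edge bound of the form ``chain length $\le c\cdot|e|$'' holds, and the amortization you sketch (taut chord sub-sequences per pocket plus a global vertex budget) is never made precise and does not obviously produce the stated bound. You also attach the factor $n-1$ to the number of edges of $\pi_{H'}(s,t)$, which turns out to be the wrong object to count.

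The paper's resolution is different and simpler: it never compares a replacement chain to the edge it replaces. Instead it bounds each \emph{individual edge} of the replacement walk by the \emph{global} quantity $|\pi_{H'}(s,t)|$. Concretely, for a simulated edge $uv$, let $x$ be the last crossing of $\pi_{H'}(s,t)$ with the segment $st$ before $u$ and $y$ the first crossing after $v$; the pocket boundary from $u$ to $v$ cannot cross $st$, so it is contained in the polygon bounded by the segment $xy$ and the portion of $\pi_{H'}(s,t)$ between $x$ and $y$. A line segment inside a polygon has length at most half the polygon's perimeter, so every edge of the replacement chain has length at most $|\pi_{H'}(s,t)|$. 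Finally, the concatenated walk is shortcut to a \emph{simple} path in $H$, which has at most $n-1$ edges, each of length at most $|\pi_{H'}(s,t)|$; the bound $(n-1)\cdot|\pi_{H'}(s,t)|$ follows. Without this (or some equivalent) absolute bound on each edge of the walk, your argument does not close.
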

\begin{proof}
It suffices to show that every edge $uv$ on the shortest path in $H'$ can be replaced by a path in $H$ whose length is at most $|\pi_{H'}(s, t)|$. 
  The claim trivially holds if $u v$ is also an edge of $H$, thus we focus on the case where $u v$ is not an edge of $H$. Note that this implies that $u v$ is either an edge of the convex hull of $H$ or a visibility edge between two vertices of the same internal face. Instead of following $u v$, we \emph{simulate} $u v$ by following the path $\pi'$ along the pocket of $H$ from $u$ to $v$ (the path along the boundary of $H$ and the outer face that does not visit both sides of $s t$; see Fig.~\ref{fig:SimulatingPath}).

\begin{figure}[h]
  \begin{center}
    \includegraphics{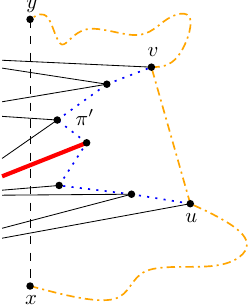}
  \end{center}
  \caption{The edge $uv$ on the shortest path in $H'$ (dot dashed orange) can be simulated with a path $\pi'$ (dotted blue) by walking along the face of a pocket of $H$. Any edge on that walk is contained in the polygon defined by the vertical segment $xy$ and the shortest path in $H'$.}
  \label{fig:SimulatingPath}
\end{figure}
  
We follow $\pi_{H'}(s, t)$ from $s$ to $t$ and consider the intersections between $\pi_{H'}(s, t)$ and the segment $st$ (they must cross at least twice: once at $s$ and once at $t$). Let $x$ be the last intersection before $u$ in $\pi_{H'}(s, t)$ and $y$ be the first intersection after $v$. Let $\mathcal{P}'$ be the polygon determined by segment $x y$, and the portion of $\pi_{H'}(s, t)$ that lies between $x$ and $y$. Since $\pi'$ lies on the boundary of a pocket, it cannot cross $s t$ and therefore it must be contained in $\mathcal{P}'$. In particular, all edges of $\pi'$ must lie inside $\mathcal{P}'$. Since a line segment inside a polygon has length at most half the perimeter of that polygon, the length of each edge of $\pi'$ is at most the length of $\pi_{H'}(s, t)$ from $x$ to $y$, which is at most $|\pi_{H'}(s, t)|$.

We concatenate all simulated paths and shortcut the resulting path from $s$ to $t$ such that every vertex is visited at most once. The result is a simple path which consists of at most $n-1$ edges, each one having length at most $|\pi_{H'}(s, t)|$. This completes the proof.
\end{proof}

By combining Lemmas~\ref{lem:VirtualShort} and \ref{lem:VirtualLinearBound} we obtain the desired ratio between the paths in $G$ and $H$.
\begin{theorem}
  Any triangulation $G$ satisfies $|\pi_{H}(s, t)| \leq (n-1) \cdot |\pi_G(s, t)|$. 
\end{theorem}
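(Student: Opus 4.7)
The plan is to combine the two lemmas that have already been established in the preceding development, since the theorem is stated precisely as their composition. First I would invoke Lemma~\ref{lem:VirtualShort}, which compares the shortest path in the auxiliary graph $H'$ with the shortest path in the full triangulation $G$ and gives $|\pi_{H'}(s,t)| \leq |\pi_G(s,t)|$. Then I would invoke Lemma~\ref{lem:VirtualLinearBound}, which controls the cost of simulating the extra (convex hull and newly created face) edges of $H'$ by detours that stay inside $H$, giving $|\pi_H(s,t)| \leq (n-1)\cdot |\pi_{H'}(s,t)|$.

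Chaining the two inequalities yields $|\pi_H(s,t)| \leq (n-1)\cdot |\pi_{H'}(s,t)| \leq (n-1)\cdot |\pi_G(s,t)|$, which is exactly the claim. No additional geometric argument is required at this stage, since all the work has already been absorbed into the two lemmas. The only thing worth flagging is that the $n-1$ factor in Lemma~\ref{lem:VirtualLinearBound} comes from the shortcut step that ensures the simulated path visits each vertex at most once; this is what makes the overall bound linear in $n$ rather than potentially larger.

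The main obstacle in the whole construction lies not in this final theorem but in Lemma~\ref{lem:VirtualShort}, where one has to handle subpaths of $\pi_G(s,t)$ that leave $H'$ entirely and come back. There the homotopy argument on the outer boundary of $H'$, together with the observation that constraints crossed by $\pi'$ cannot have both endpoints in $H$, does the heavy lifting. Once that lemma is available, the present theorem is essentially a one-line corollary, and matches the $\Omega(n)$ lower bound proved earlier, so it is tight in the worst case.
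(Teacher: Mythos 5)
Your proof is correct and is exactly what the paper does: the theorem is obtained by chaining Lemma~\ref{lem:VirtualShort} and Lemma~\ref{lem:VirtualLinearBound}, with no further argument needed.
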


\section{Conclusions}
In this paper we presented two routing algorithms. The first one works in the visibility graph but its competitiveness is not bounded by any function of $n$ (the number of vertices). The second algorithm is $O(n)$-competitive (which is worst-case optimal due to the lower bound shown in Section~\ref{sec:lowerbound}), but it requires a triangulated subgraph of $\Vis(S,P)$. This leads to the following open problem: can one locally compute a triangulation of $\Vis(S,P)$? It is known that the constrained Delaunay triangulation cannot be computed locally (since it contains non-local information such as convex hull edges) and the constrained half-$\Theta_6$-graph is not necessarily a triangulation. Thus, finding a subgraph that can be computed locally from the visibility graph is a natural open problem. \\

\noindent\textbf{Acknowledgements}

\noindent We thank Luis Barba, Sangsub Kim, and Maria Saumell for fruitful discussions.

\bibliography{references}

\begin{thebibliography}{10}

\bibitem{BGHI10}
Nicolas Bonichon, Cyril Gavoille, Nicolas Hanusse, and David Ilcinkas.
\newblock Connections between theta-graphs, {D}elaunay triangulations, and
  orthogonal surfaces.
\newblock In {\em Proceedings of the 36th International Conference on Graph
  Theoretic Concepts in Computer Science (WG 2010)}, pages 266--278, 2010.

\bibitem{BFRV2017RoutingJournal}
Prosenjit Bose, Rolf Fagerberg, Andr\'e van Renssen, and Sander Verdonschot.
\newblock Competitive local routing with constraints.
\newblock {\em Journal of Computational Geometry (JoCG)}, 8(1):125--152, 2017.

\bibitem{BFRV12Constrained}
Prosenjit Bose, Rolf Fagerberg, Andr\'e van Renssen, and Sander Verdonschot.
\newblock On plane constrained bounded-degree spanners.
\newblock {\em Algorithmica}, 81(4):1392--1415, 2019.

\bibitem{BK06}
Prosenjit Bose and J.~Mark Keil.
\newblock On the stretch factor of the constrained {D}elaunay triangulation.
\newblock In {\em Proceedings of the 3rd International Symposium on Voronoi
  Diagrams in Science and Engineering (ISVD 2006)}, pages 25--31, 2006.

\bibitem{BKRV2017Routing}
Prosenjit Bose, Matias Korman, Andr\'e van Renssen, and Sander Verdonschot.
\newblock Constrained routing between non-visible vertices.
\newblock In {\em Proceedings of the 23rd Annual International Computing and
  Combinatorics Conference (COCOON 2017)}, volume 10392 of {\em Lecture Notes
  in Computer Science}, pages 62--74, 2017.

\bibitem{bkrv-rvg-17}
Prosenjit Bose, Matias Korman, Andr\'e van Renssen, and Sander Verdonschot.
\newblock Routing on the visibility graph.
\newblock {\em Journal of Computational Geometry (JoCG)}, 9(1):430–453, 2018.

\bibitem{BM2004}
Prosenjit Bose and Pat Morin.
\newblock Competitive online routing in geometric graphs.
\newblock {\em Theoretical Computer Science}, 324(2):273--288, 2004.

\bibitem{BMSU01}
Prosenjit Bose, Pat Morin, Ivan Stojmenovic, and Jorge Urrutia.
\newblock Routing with guaranteed delivery in ad hoc wireless networks.
\newblock {\em Wireless Networks}, 7(6):609--616, 2001.

\bibitem{BR14}
Prosenjit Bose and Andr\'e van Renssen.
\newblock Spanning properties of {Y}ao and $\theta$-graphs in the presence of
  constraints.
\newblock {\em International Journal of Computational Geometry \& Applications
  (IJCGA)}, 29(02):95--120, 2019.

\bibitem{C87}
Ken Clarkson.
\newblock Approximation algorithms for shortest path motion planning.
\newblock In {\em Proceedings of the 19th Annual ACM Symposium on Theory of
  Computing (STOC 1987)}, pages 56--65, 1987.

\bibitem{D97}
Gautam Das.
\newblock The visibility graph contains a bounded-degree spanner.
\newblock In {\em Proceedings of the 9th Canadian Conference on Computational
  Geometry (CCCG 1997)}, pages 70--75, 1997.

\bibitem{EM90}
Herbert Edelsbrunner and Ernst~Peter M{\"u}cke.
\newblock Simulation of simplicity: A technique to cope with degenerate cases
  in geometric algorithms.
\newblock {\em ACM Transactions on Graphics}, 9(1):66--104, 1990.

\bibitem{EC95}
Ioannis~Z. Emiris and John~F. Canny.
\newblock A general approach to removing degeneracies.
\newblock {\em SIAM Journal on Computing (SICOMP)}, 24(3):650--664, 1995.

\bibitem{KG92}
J.~Mark Keil and Carl~A. Gutwin.
\newblock Classes of graphs which approximate the complete euclidean graph.
\newblock {\em Discrete {\&} Computational Geometry}, 7:13--28, 1992.

\bibitem{KSU99}
Evangelos Kranakis, Harvinder Singh, and Jorge Urrutia.
\newblock Compass routing on geometric networks.
\newblock In {\em Proceedings of the 11th Canadian Conference on Computational
  Geometry (CCCG 1999)}, pages 51--54, 1999.

\bibitem{G09}
Sudip Misra, Subhas~Chandra Misra, and Isaac Woungang.
\newblock {\em Guide to Wireless Sensor Networks}.
\newblock Springer, 2009.

\bibitem{R09}
Harald R{\"a}cke.
\newblock Survey on oblivious routing strategies.
\newblock In {\em Mathematical {T}heory and {C}omputational {P}ractice}, volume
  5635 of {\em Lecture Notes in Computer Science}, pages 419--429, 2009.

\bibitem{Y90}
Chee-Keng Yap.
\newblock A geometric consistency theorem for a symbolic perturbation scheme.
\newblock {\em Journal of Computer and System Sciences}, 40(1):2 -- 18, 1990.

\end{thebibliography}

\end{document}